\newtheorem{theorem}{Theorem}[section]
\newtheorem{corollary}{Corollary}[section]
\newtheorem{proposition}{Proposition}[section]
\newcommand{\tr}{\operatorname{Tr}}
\newcommand{\defeq}{\stackrel{\smash{\textnormal{\tiny def}}}{=}}
\def\C{\mathbb{C}}
\def\R{\mathbb{R}}
\def\D{{\mathcal D}}
\def\M{{\mathcal M}}
\def\diag{{\rm diag}}
\begin{document}
\title{Approximating Quantum States by States of Low Rank}

\author{
	Nathaniel Johnston\textsuperscript{1} and Chi-Kwong Li\textsuperscript{2}
}

\date{January 23, 2026}

\maketitle

\begin{abstract}
    Given a positive integer $k$, it is natural to ask for a formula for the distance between a given density matrix (i.e., mixed quantum state) and the set of density matrices of rank at most $k$. This problem has already been solved when ``distance'' is measured in the trace or Frobenius norm. We solve it for all other unitary similarity invariant norms. We also present some consequences of our formula. For example, in the trace and Frobenius norms, the density matrix that is farthest from the set of low-rank density matrices is the maximally-mixed state, but this is not true in many other unitary similarity invariant norms.\\

    \noindent \textbf{Keywords:}  quantum information theory, unitary similarity invariant norm, matrix approximation, rank of a matrix\\
	
	\noindent \textbf{MSC2010 Classification:}  
  15A03; 
  15A60; 
  65F35  
\end{abstract}

\addtocounter{footnote}{1}
\footnotetext{Department of Mathematics \& Computer Science, Mount Allison University, Sackville, NB, Canada E4L 1E4}

\addtocounter{footnote}{1}
\footnotetext{Department of Mathematics, The College of William and Mary, Williamsburg, VA, USA 23185}

\section{Introduction}\label{sec:intro}

In quantum information theory, the rank of a quantum state (i.e., density matrix) is a fundamental quantity that provides a discrete measure of ``how mixed'' that state is. For most applications, pure (i.e., rank~$1$) states are the most desirable and are the easiest to manipulate (see \cite{Mer98}, \cite{SKW18}, \cite[Section~1.1]{Jae07}, and the references therein), though low-rank states in general also have computational advantages \cite{EBA23,GS24,SSS25}. For this reason it seems natural to ask, given a quantum state, what is the closest low-rank state to it?

There are many different norms that we could use to measure the distance between quantum states, and a priori it seems possible that different norms may lead to different low-rank approximations. The most common method of measuring distance between quantum states is to use the trace norm, but the Frobenius, operator, Schatten \cite[Section~1.1.3]{Wat18}, and Ky Fan norms \cite{CK09} all make frequent appearances in the area as well, as do the numerical radius and several others \cite{JK15}. All of these norms have a property called ``unitary similarity invariance'': they remain unchanged under conjugation by an arbitrary unitary matrix.

If we did not care about the trace of the low-rank approximation of our density matrix, then the well-known Eckart--Mirsky--Young theorem \cite{EY36,Mir60} would tell us that, regardless of which unitary similarity invariant norm is being used, the closest rank (at most)-$k$ approximation is obtained by truncating the matrix's spectral decomposition after the $k$ largest eigenvalues. However, since we require that the rank-$k$ approximation itself be a valid density matrix (and thus have trace $1$), the problem is not so simple.

It was shown in \cite{EHC22} that, for the trace norm and the Frobenius norm, the closest rank-$k$ state is obtained by applying a certain additive scaling to the spectral decomposition truncation. Our main contribution is to show that the same is true in every other unitary similarity invariant norm (Theorem~\ref{thm:low_rank}).

There are numerous potential applications of this main result. For example, low-rank quantum states can be stored and transmitted more efficiently than high-rank quantum states, since a rank-$k$ state can be specified by just $O(kn)$ real parameters (via its spectral decomposition) rather than the $n^2 - 1$ real parameters that are needed to specify an arbitrary state. It thus might be desirable in some limited-resource situations to approximate a quantum state by the nearest low-rank state and work with or transmit that state instead. As another example, in quantum information theory the problem of reconstructing a quantum state based on measurement results is called state tomography. Performing state tomography to reconstruct pure (i.e., rank-$1$) states is a task that has been thoroughly studied \cite{CJZ13,VD23}, as has this task for other low ranks \cite{Ach16}. Because measurements are noisy in practice, tomography often results in a quantum state that is not actually pure even though it should be, and the same is true when tomography is applied to states of other low ranks. Our results provide a way of converting the (incorrect) tomographically-reconstructed state into the nearest state that has the desired rank.

We then explore several consequences of our result, such as the question of which quantum state is farthest from the set of low-rank quantum states, or equivalently what the maximum possible distance between a quantum state and the set of low-rank quantum states is. Distance measures like this are common in the theory of quantum resources~\cite{Reg18}, and it is important to know what values they can take on.

For example, the well-studied geometric measure of entanglement asks for the distance between a bipartite pure state and the set of separable pure states, and this quantity is maximized exactly by the maximally-entangled states \cite{WG03} (i.e., the states with all of their Schmidt coefficients equal to each other). We show that, similarly, the answer of the analogous question of which state is farthest from low-rank is the maximally-mixed state in several cases, such as when the norm being used is the trace norm (Proposition~\ref{prop:max_mixed_trace_norm}). However, we also show that there are many other unitary similarity invariant norms for which this is not true. In general, all that can be said is that the farthest state is one of $n-k$ different candidates (Theorem~\ref{thm:max_mixed_maximizes}), one of which is the maximally-mixed state.

We then further investigate these questions for the Schatten and Ky Fan norms, since they are the most commonly-occurring families of unitary similarity-invariant norms in quantum information theory (see \cite{Wat05,Wat18,HO22} and the references therein, for example). The Schatten norms occur frequently in this area because the trace, Frobenius, and operator norms occur as special cases of them, and because of their connection to the (now-disproved) additivity conjecture \cite{HW08,Has09}. The Ky Fan norms similarly make frequent appearances in the field because of their connection to majorization, which is absolutely ubiquitous in the field \cite{Nie02,Jab13}.

In particular, we show that if $p \in \{1\} \cup [2,4]$, then the farthest state from the set of low-rank states when using the Schatten $p$-norm is always the maximally-mixed state, but for all other values of $p$ there exists a rank $k$ and a dimension $n$ for which the farthest state (in Schatten $p$-norm) is not the maximally-mixed state (Theorem~\ref{thm:schatten_124}). We also determine the farthest state in the Ky Fan norms in Theorem~\ref{thm:ky_fan_extreme}.

\subsection{Preliminaries and notation}\label{sec:prelim}

Let $\M_{n}$ denote the set of $n \times n$ complex matrices. Let $\M_n^{+} \subset \M_n$ denote the set of (Hermitian) positive semidefinite matrices. If $X \in \M_n^{+}$ has $\tr(X) = 1$ then it is called a \emph{quantum state} or a \emph{density matrix}; we denote the set of density matrices in $\M_n^{+}$ by $\D_n$. Given an integer $k \in \{1,2,\ldots,n\}$, let $\D_{n,k} \subseteq \D_n$ denote the set of density matrices of rank at most $k$.

Denote the zero matrix and the identity matrix by $O$ and $I$, respectively, or $O_n$ and $I_n$ if we wish to emphasize that they are $n \times n$. The density matrix $\tfrac{1}{n}I_n$ is called the \emph{maximally-mixed state}. For a vector $(x_1,x_2,\ldots,x_n) \in \C^n$, let $\diag(x_1,x_2,\ldots,x_n) \in \M_n$ denote the diagonal matrix with diagonal entries $x_1,x_2,\ldots,x_n$ (in that order). For matrices $X \in \M_m$ and $Y \in \M_n$, $X \oplus Y \in \M_{m+n}$ is the direct sum of $X$ and $Y$, which we will typically think of as a block-diagonal matrix with diagonal blocks $X$ and $Y$, in that order.

A norm on $\M_n$ is \emph{unitary similarity invariant (USI)} if $\|U^*XU\| = \|X\|$ for all $X \in \M_n$ and all unitary $U \in \M_n$. If $\sigma_1 \geq \sigma_2 \geq \cdots \geq \sigma_n \geq 0$ denote the singular values of $X$ then some standard examples of USI norms include:
\begin{itemize}
    \item The operator norm: $\|X\|_{\textup{op}} = \sigma_1$.

    \item The trace norm: $\|X\|_{\textup{Tr}} = \sum_{j=1}^n \sigma_j$.

    \item The Frobenius norm: $\|X\|_{\textup{F}} = \sqrt{\sum_{j=1}^n \sigma_j^2}$.

    \item The Schatten norms: For real $p \in [1,\infty)$, $\|X\|_p = \left(\sum_{j=1}^n \sigma_j^p\right)^{1/p}$. The trace and Frobenius norms arise in the $p = 1$ and $p = 2$ special cases, respectively. The operator norm also arises in the large-$p$ limit: $\lim_{p\rightarrow\infty}\|X\|_p = \|X\|_{\textup{op}}$. For this reason, we often say $p \in [1,\infty]$ with the understanding that $\|X\|_\infty = \|X\|_{\textup{op}}$.

    \item The Ky Fan norms: For an integer $r \in \{1,2,\ldots,n\}$, $\|X\|_{(r)} = \sum_{j=1}^r \sigma_j$. The operator and trace norms arise in the $r = 1$ and $r = n$ special cases, respectively.
\end{itemize}

Given vectors $\mathbf{x}, \mathbf{y} \in \R^n$, we use $x_j$ and $y_j$ to denote the $j$-th entries, respectively, and we use $x_j^\downarrow$ and $y_j^\downarrow$ to denote their $j$-th largest entries, respectively (e.g., $x_1^\downarrow$ is the largest entry of $\mathbf{x}$, $x_2^\downarrow$ is its next largest entry, and so on). We say that $\mathbf{y}$ \emph{majorizes} $\mathbf{x}$, and we write $\mathbf{x} \prec \mathbf{y}$, if $\sum_{j=1}^r x_j^\downarrow \leq \sum_{j=1}^r y_j^\downarrow$ for all $r \in \{1,2,\ldots,n\}$ and $\sum_{j=1}^n x_j = \sum_{j=1}^n y_j$.

We note that majorization has numerous applications in quantum information theory (see \cite{Nie02} and \cite[Section~4.3]{Wat18}, for example). Our particular interest in it comes from the following result that relates majorization to USI norms:

\begin{theorem}[{\cite[Theorem~2.2]{LT1}}]\label{thm:usi_from_major}
    For a Hermitian matrix $A \in \M_n$, let $\lambda(A) \in \R^n$ denote its sorted vector of eigenvalues (i.e., $\lambda(A) = (a_1, a_2, \ldots, a_n)$, where $a_1 \geq a_2 \geq \cdots \geq a_n$ are the eigenvalues of $A$). Suppose $X, Y \in \M_n$ are Hermitian. The following are equivalent:
    \begin{itemize}
        \item $\|X\| \leq \|Y\|$ for all USI norms.

        \item There exists $t \in [0,1]$ for which $\lambda(X) \prec t\lambda(Y) + (1-t)\lambda(-Y)$.
    \end{itemize}
\end{theorem}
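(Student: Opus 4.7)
The plan is to exploit the correspondence between USI norms on $\M_n$ restricted to Hermitian matrices and permutation-invariant norms on $\R^n$. Specifically, if $\|\cdot\|$ is USI then $\|H\| = f(\lambda(H))$ for the permutation-invariant norm $f(v) := \|\diag(v)\|$; conversely, any permutation-invariant norm $f$ on $\R^n$ extends to a USI norm on $\M_n$ via $\|X\|_{f,\epsilon} := f\bigl(\lambda((X+X^*)/2)\bigr) + \epsilon\|X\|_{\textup{F}}$ for any $\epsilon > 0$ (Ky Fan's eigenvalue inequality makes the first summand a USI seminorm, and the Frobenius term upgrades it to a norm). Writing $x := \lambda(X)$, $y := \lambda(Y)$, and $\tilde y := \lambda(-Y) = (-y_n,\ldots,-y_1)$, condition (i) thus reduces to the vector statement $f(x) \leq f(y)$ for every permutation-invariant norm $f$ on $\R^n$.

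For (ii) $\Rightarrow$ (i), I would combine the Schur-convexity of permutation-invariant norms with the identity $f(\tilde y) = f(-y) = f(y)$ (from absolute homogeneity and permutation invariance): if $x \prec ty + (1-t)\tilde y$ then
\[
    f(x) \;\leq\; f\bigl(ty + (1-t)\tilde y\bigr) \;\leq\; tf(y) + (1-t)f(\tilde y) \;=\; f(y),
\]
so $\|X\| \leq \|Y\|$.

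For (i) $\Rightarrow$ (ii), the first step is to show that (i) forces $x \in \op{conv}\{\pm Py : P \text{ a permutation matrix}\}$. If not, Hahn--Banach supplies a linear functional $\phi$ with $\phi(x) > \max_P |\phi(Py)|$, and then $f_\epsilon(v) := \max_P |\phi(Pv)| + \epsilon\|v\|_2$ is a permutation-invariant norm satisfying $f_\epsilon(x) > f_\epsilon(y)$ for all small $\epsilon > 0$, contradicting (i) via the extension to $\M_n$ described above. The second step is to unpack a convex representation $x = \sum_i \alpha_i P_i y + \sum_j \beta_j Q_j(-y)$, set $t := \sum_i \alpha_i \in [0,1]$, and write $x = tA + (1-t)B$ with $A$ a convex combination of the $P_i y$ (so $A \prec y$) and $B$ a convex combination of the $Q_j(-y)$ (so $B \prec \tilde y$). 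For every $r$, subadditivity of the partial-sum functional $v \mapsto \sum_{j=1}^r v_j^\downarrow$ then gives
\[
\sum_{j=1}^r x_j^\downarrow \;\leq\; t \sum_{j=1}^r A_j^\downarrow + (1-t)\sum_{j=1}^r B_j^\downarrow \;\leq\; t \sum_{j=1}^r y_j + (1-t) \sum_{j=1}^r \tilde y_j \;\leq\; \sum_{j=1}^r \bigl(ty + (1-t)\tilde y\bigr)_j^\downarrow,
\]
while the trace identity $\sum_j x_j = t\sum_j y_j + (1-t)\sum_j \tilde y_j$ is immediate from the decomposition, so $x \prec ty + (1-t)\tilde y$.

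The main technical obstacle, I expect, is the first step of (i) $\Rightarrow$ (ii): turning the separating functional on $\R^n$ into a genuine USI norm on all of $\M_n$. The construction $\|\cdot\|_{f,\epsilon}$ plays a double role --- the term $\epsilon\|v\|_2$ promotes the seminorm $\max_P|\phi(P\cdot)|$ to a norm on $\R^n$, while the Frobenius term promotes the Hermitian-only functional $f \circ \lambda \circ \tfrac{1}{2}(\cdot + \cdot^*)$ to a norm on $\M_n$ --- and one must verify that both perturbations can be made small enough to preserve the strict inequality $\|X\| > \|Y\|$. Edge cases such as $Y = 0$ (trivial) and $\tr Y = 0$ (where $\op{conv}\{\pm Py\}$ lies in the trace-zero hyperplane) fit naturally into this framework.
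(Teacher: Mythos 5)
First, a point of order: the paper does not prove this statement --- it is imported verbatim from \cite{LT1}, so there is no internal proof to compare against. Judged on its own terms, your overall strategy is the standard (and correct) one: identify USI norms restricted to Hermitian matrices with permutation-invariant norms on eigenvalue vectors, get (ii) $\Rightarrow$ (i) from Schur-convexity together with $f(-v)=f(v)$, and get (i) $\Rightarrow$ (ii) by separating $\lambda(X)$ from $\op{conv}\{\pm P\lambda(Y)\}$ and unpacking the resulting convex combination. The majorization bookkeeping in your final step is right: since $\lambda(Y)$ and $\lambda(-Y)$ are both already sorted, $t\lambda(Y)+(1-t)\lambda(-Y)$ is its own decreasing rearrangement, so the chain of partial-sum inequalities closes, and the trace identity handles the equality constraint.

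The one genuine defect is the extension $\|X\|_{f,\epsilon}=f(\lambda((X+X^*)/2))+\epsilon\|X\|_{\textup{F}}$. Since $\M_n$ is a complex vector space, a norm must satisfy $\|cX\|=|c|\,\|X\|$ for all $c\in\C$, and this functional does not: for Hermitian $H$ one has $(iH+(iH)^*)/2=O$, so $\|iH\|_{f,\epsilon}=\epsilon\|H\|_{\textup{F}}$ while $|i|\,\|H\|_{f,\epsilon}=f(\lambda(H))+\epsilon\|H\|_{\textup{F}}$. You have therefore built only a real-linear USI norm, which is not a priori in the class quantified over in (i), and the contradiction in your separation step does not yet follow. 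The repair is local rather than structural: the particular permutation-invariant norm you need, $f_\epsilon(v)=\max_P|\phi(Pv)|+\epsilon\|v\|_2$ with $\phi=\langle c,\cdot\rangle$, is the restriction to eigenvalue vectors of the genuinely complex USI norm $Z\mapsto \max_U|\tr(\diag(c)\,U^*ZU)|+\epsilon\|Z\|_{\textup{F}}$, i.e.\ a perturbed $C$-numerical radius. Indeed, for Hermitian $Z$ the set $\{\tr(\diag(c)\,U^*ZU)\}$ is the real interval with endpoints $\langle c^\downarrow,\lambda(Z)^\downarrow\rangle$ and $\langle c^\downarrow,\lambda(Z)^\uparrow\rangle$, so its largest modulus is exactly $\max_P|\langle c,P\lambda(Z)\rangle|$. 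With that substitution your argument goes through, and it is essentially the route taken in \cite{LT1}, where the $C$-numerical radii are shown to control the entire class of USI norms.
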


\section{Low-rank approximation of density matrices}\label{sec:main_low_rank}

We now introduce the main problem that we are concerned with. Given integers $1 \leq k \leq n$, a unitary similarity invariant norm $\| \cdot \|$ on $\M_n$, and a density matrix $X \in \D_n$, our goal now is to determine the value of
\[
    d(X,\D_{n,k}) \defeq \min_{Y \in \D_{n,k}}\big\{ \|X - Y \| \big\},
\]
and to find a density matrix $Y \in \D_{n,k}$ attaining this minimum value.

In the case when the norm is the trace or Frobenius norm, this problem was solved in \cite{EHC22}, where it was shown that a closest rank (at most)-$k$ density matrix has the form $Y = P_k X P_k + \gamma P_k$, where $P_k$ is the orthogonal projection onto the direct sum of the eigenspaces corresponding to the $k$ largest eigenvalues of $X$, and $\gamma$ is the unique real scalar that results in $Y$ having trace $1$. We now show that the same is true for all unitary similarity invariant norms.

\begin{theorem}\label{thm:low_rank}
    Suppose $1 \leq k \leq n$ are integers, $\| \cdot \|$ is a unitary similarity invariant norm, and $X \in \D_n$ has spectral decomposition $X = \sum_{j=1}^n x_j \mathbf{v}_j \mathbf{v}_j^*$ with $x_1 \ge \cdots \ge x_n \ge 0$. Define
    \[
        \gamma = \frac{1}{k}\sum_{j=k+1}^n x_j \quad \text{and} \quad Y = \sum_{j=1}^k (x_j + \gamma) \mathbf{v}_j \mathbf{v}_j^*.
    \]
    Then $d(X,\D_{n,k}) = \|X - Y\|$. In other words, $\|X - Y\|\le \|X - Z\|$ for all $Z \in \D_{n,k}$.
\end{theorem}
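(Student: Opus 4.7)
My plan is to invoke Theorem \ref{thm:usi_from_major} with parameter $t = 1$: this reduces proving $\|X - Y\| \le \|X - Z\|$ for all USI norms and all $Z \in \D_{n,k}$ to the single majorization statement $\lambda(X - Y) \prec \lambda(X - Z)$. Both matrices have trace zero (since $\tr(X) = \tr(Y) = \tr(Z) = 1$), so it remains only to verify $\sum_{j=1}^r \lambda_j(X - Y) \le \sum_{j=1}^r \lambda_j(X - Z)$ for each $r \in \{1, \ldots, n - 1\}$. Because $X - Y = -\gamma \sum_{j=1}^k \mathbf{v}_j \mathbf{v}_j^* + \sum_{j=k+1}^n x_j \mathbf{v}_j \mathbf{v}_j^*$, the sorted eigenvalues of $X - Y$ are $(x_{k+1}, x_{k+2}, \ldots, x_n, -\gamma, \ldots, -\gamma)$ (with $k$ copies of $-\gamma$), so the left-hand partial sum equals $\sum_{j=1}^r x_{k+j}$ when $r \le n - k$, and equals $(n-r)\gamma$ when $r > n - k$.

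For $r \le n - k$, I would invoke Weyl's inequality on the Hermitian sum $X + (-Z)$. Since $Z \in \M_n^{+}$ has rank at most $k$, one has $\lambda_{n-k}(-Z) = 0$, so Weyl gives $\lambda_s(X - Z) \ge \lambda_{s+k}(X) + \lambda_{n-k}(-Z) = x_{s+k}$ for $1 \le s \le n - k$; summing yields the required inequality.

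The case $r > n - k$ is where the real work lies. Writing $t = n - r \in \{1, \ldots, k\}$ and using $\tr(X - Z) = 0$ together with $\lambda_j(X-Z) = -\lambda_{n+1-j}(Z-X)$, the desired inequality is equivalent to
\[
    \sum_{j=1}^t \lambda_j(Z - X) \ge t\gamma.
\]
To prove this I would use the standard variational bound $\sum_{j=1}^t \lambda_j(H) \ge \tr(HA)$, valid for any Hermitian $H$ and any $A$ with $0 \le A \le I$ and $\tr(A) = t$, applied to the following cleverly chosen witness: let $P$ be any rank-$k$ orthogonal projection with $\operatorname{range}(P) \supseteq \operatorname{range}(Z)$ (possible because $\operatorname{rank}(Z) \le k$), and set $A = (t/k)P$. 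Then $0 \le A \le I$ with $\tr(A) = t$, and since $PZ = Z$ we have $\tr(ZA) = (t/k)\tr(Z) = t/k$, while Ky Fan's maximum principle applied to $X$ gives $\tr(XA) = (t/k)\tr(PXP) \le (t/k)\sum_{j=1}^k x_j$. Subtracting,
\[
    \tr\bigl((Z - X)A\bigr) \ge \frac{t}{k}\Bigl(1 - \sum_{j=1}^k x_j\Bigr) = \frac{t}{k}\sum_{j=k+1}^n x_j = t\gamma,
\]
as desired. I expect this case to be the main obstacle: a plain rank-$t$ projection witness does not seem to suffice, and the trick is to use instead a rank-$k$ projection containing $\operatorname{range}(Z)$ and rescale it by $t/k$ so that the trace-$t$ constraint on $A$ is met. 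With both cases in hand, Theorem \ref{thm:usi_from_major} then delivers $\|X - Y\| \le \|X - Z\|$ for every USI norm.
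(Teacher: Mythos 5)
Your proof is correct, but it takes a genuinely different route from the paper's. The paper first reduces to the commuting case via the Lidskii-type inequality $\|X-Z\| \ge \|\diag(x_1 - z_1, \ldots, x_n - z_n)\|$ (citing \cite[Equation~(IV.62)]{Bha97}, with a footnote extending it from unitarily invariant to USI norms), and then needs only the one-line vector majorization $(-\gamma,\ldots,-\gamma) \prec (x_1-z_1,\ldots,x_k-z_k)$, which holds because a constant vector is majorized by any vector with the same sum; Theorem~\ref{thm:usi_from_major} with $t=1$ finishes as in your argument. You instead prove the matrix majorization $\lambda(X-Y) \prec \lambda(X-Z)$ from scratch: your Weyl step for $r \le n-k$ is a valid application of $\lambda_{i+j-n}(A+B) \ge \lambda_i(A)+\lambda_j(B)$ with $i = s+k$, $j = n-k$, and your witness $A = (t/k)P$ with $\operatorname{range}(P) \supseteq \operatorname{range}(Z)$ correctly satisfies $0 \le A \le I$ and $\tr(A) = t$ since $t \le k$, with the computation $\tr((Z-X)A) \ge t\gamma$ exploiting $\tr(Z) = 1$ exactly as needed. (Minor quibble: for $r > n-k$ one has $t = n-r \in \{0,1,\ldots,k-1\}$ rather than $\{1,\ldots,k\}$, but $t=0$ is the trace case and your argument is valid for all $t \le k$, so nothing breaks.) The two routes are secretly equivalent in strength --- the paper's two steps compose, via Lidskii's theorem, to the same majorization $\lambda(X-Y) \prec \lambda(X-Z)$ you prove directly --- but yours buys self-containedness (only Weyl's inequalities and the Ky Fan variational principle, avoiding the black-box citation and its footnote caveat), at the cost of being longer and requiring the nontrivial rescaled-projection trick, whereas the paper's argument is shorter and isolates all the matrix-analytic difficulty in one standard reference.
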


The orthogonal projection $P_k$ discussed earlier is, in terms of quantities described by this theorem, $P_k = \sum_{j=1}^k \mathbf{v}_j \mathbf{v}_j^*$, resulting in $Y = P_k X P_k + \gamma P_k$ as claimed. Furthermore, this minimizer $Y$ is unique (up to unitary similarity) whenever the USI norm is strictly convex (compare with the result of \cite{EHC22}, which noted that this minimizer $Y$ is unique for the Frobenius norm but not for the trace norm).

\begin{proof}[Proof of Theorem~\ref{thm:low_rank}.]
    Suppose $X$ and $Y$ have the spectral decompositions described by the statement of the theorem. Let $Z \in \D_n$ have eigenvalues $z_1 \ge \cdots \ge z_n$ such that $z_j = 0$ for $j > k$ (i.e., $Z \in \D_{n,k}$). Then 
    \begin{equation}\label{ineq:norm_to_diag}
        \|X - Z\| \ge \|\diag(x_1-z_1, \ldots, x_n-z_n)\| = \|\diag(x_1-z_1, \ldots, x_k-z_k, x_{k+1}, \ldots, x_n)\|,
    \end{equation}
    with the inequality following, for example, from \cite[Equation~(IV.62)]{Bha97}.\footnote{Strictly speaking, that result is stated just for unitarily-invariant norms (i.e., norms for which $\|UXV\| = \|X\|$ whenever $U$ and $V$ are unitary), but the exact same proof works for unitary similarity invariant norms.} Now observe that
    \[
        -\gamma = -\frac{1}{k}\sum_{j=k+1}^n x_j = \left(\frac{1}{k}\sum_{j=1}^k x_j\right) - \frac{1}{k} = \frac{1}{k}\sum_{j=1}^k (x_j - z_j).
    \]
    It follows that $(-\gamma, \dots, -\gamma) \prec (x_1-z_1, \ldots, x_k-z_k)$, so
    \[
        (-\gamma, \ldots, -\gamma, x_{k+1}, \ldots, x_n) \prec (x_1-z_1, \ldots, x_k-z_k, x_{k+1}, \ldots, x_n).
    \]
    By Theorem~\ref{thm:usi_from_major}, we have
    \[
        \|\diag(x_1-z_1, \ldots, x_k-z_k, x_{k+1}, \ldots, x_n)\| \geq \|\diag(-\gamma, \ldots, -\gamma, x_{k+1}, \ldots, x_n)\| = \|X - Y\|.
    \]
    When combined with Inequality~\eqref{ineq:norm_to_diag}, this gives $\|X - Y\| \le \|X - Z\|$, which completes the proof.
\end{proof}

It's worth noting that if $X \in \D_n$ then constructing the spectral decomposition of $X$ requires full knowledge of $X$, which can be costly when $n$ is large. However, the best rank-$k$ approximation that is described by this theorem can be found, for example, by using power iterations to find the largest eigenvalue $x_1$ and corresponding eigenprojection $\mathbf{v}_1 \mathbf{v}_1^*$, then the second-largest eigenvalue $x_2$ and corresponding eigenprojection $\mathbf{v}_2 \mathbf{v}_2^*$, and so on.

\section{The farthest state from the set of low-rank states}\label{sec:farthest_state}

Now that we know how to find the closest low-rank state to a given state, we explore the question of which state is farthest from the set of low-rank states. We might expect $d(X,\D_{n,k})$ to be maximized when $X = \frac{1}{n}I_n \in \D_n$ (i.e., when $X$ is the maximally-mixed state). It turns out that, for many unitary similarity invariant norms, this is not the case. For example, Theorem~\ref{thm:low_rank} tells us that if $n = 4$, $k = 3$, $\|\cdot\|$ is the operator norm, and $X = \mathrm{diag}\big(\frac{1}{3}, \frac{1}{3}, \frac{1}{3}, 0\big)$, then $d(X,\D_{n,k}) = \frac{1}{3} > \frac{1}{4} = d(\tfrac{1}{n}I_n,\D_{n,k})$.

In this section, we explore which unitary similarity invariant norms lead to $d(X,\D_{n,k})$ being maximized by $X = \frac{1}{n}I$, and where the maximum is attained by the other unitary similarity invariant norms. We start by showing that $X = \frac{1}{n}I_n$ maximizes $d(X,\D_{n,k})$ when the norm in use is the trace norm:

\begin{proposition}\label{prop:max_mixed_trace_norm}
    Suppose $1 \leq k \leq n$ are integers, $\|\cdot\|$ is the Schatten $1$-norm (i.e., the trace norm), and $X \in \D_n$. Then
    \[
        d(X,\D_{n,k}) \leq d(\tfrac{1}{n}I_n,\D_{n,k}) = \frac{2(n-k)}{n}.
    \]
\end{proposition}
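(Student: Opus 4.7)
My plan is to apply Theorem~\ref{thm:low_rank} directly and then reduce the whole statement to a one-line bound on a tail sum of eigenvalues of $X$.

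\smallskip

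\textbf{Step 1 (compute the distance explicitly).} For any $X\in\D_n$ with sorted eigenvalues $x_1\ge\cdots\ge x_n\ge 0$, Theorem~\ref{thm:low_rank} produces an optimal $Y\in\D_{n,k}$ such that the Hermitian matrix $X-Y$ has eigenvalues $-\gamma$ (with multiplicity $k$) together with $x_{k+1},\ldots,x_n$, where $\gamma=\tfrac{1}{k}\sum_{j>k}x_j\ge 0$. Since the trace norm of a Hermitian matrix is the sum of the absolute values of its eigenvalues and every quantity in sight is nonnegative, I read off
\[
    d(X,\D_{n,k})\;=\;k\gamma+\sum_{j=k+1}^n x_j\;=\;2\sum_{j=k+1}^n x_j.
\]
So the problem becomes: maximize the tail sum $\sum_{j=k+1}^n x_j$ over probability vectors $(x_1,\ldots,x_n)$ sorted in nonincreasing order.

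\smallskip

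\textbf{Step 2 (bound the tail).} Since $x_1,\ldots,x_k$ are the $k$ largest entries of a list of $n$ nonnegative numbers summing to $1$, their average is at least the overall average $1/n$, hence $\sum_{j=1}^k x_j\ge k/n$, and therefore
\[
    \sum_{j=k+1}^n x_j\;\le\;\frac{n-k}{n}.
\]
Combining with Step~1 gives $d(X,\D_{n,k})\le \tfrac{2(n-k)}{n}$, as required.

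\smallskip

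\textbf{Step 3 (verify equality).} Taking $X=\tfrac{1}{n}I_n$ makes every $x_j$ equal to $1/n$; the tail sum becomes exactly $(n-k)/n$, so the bound from Step~2 is attained, confirming $d(\tfrac{1}{n}I_n,\D_{n,k})=\tfrac{2(n-k)}{n}$.

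\smallskip

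There is no real obstacle here. The only mildly clever observation is that $k\gamma+\sum_{j>k}x_j$ collapses to $2\sum_{j>k}x_j$, after which the conclusion is essentially the trivial fact that the mean of the top $k$ entries of a probability vector is at least $1/n$. The trace norm is special in this regard because the $k$ contributions from $-\gamma$ and the $n-k$ contributions from the tail aggregate into a single scalar multiple of the tail sum; for other USI norms, these two groups of eigenvalues cannot be collapsed, which is precisely what the later sections of the paper need to handle.
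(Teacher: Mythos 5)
Your proposal is correct and follows essentially the same route as the paper's own proof: apply Theorem~\ref{thm:low_rank} to get $d(X,\D_{n,k}) = 2\sum_{j=k+1}^n x_j$ in the trace norm, then bound the tail sum by $(n-k)/n$ using the fact that the top $k$ eigenvalues sum to at least $k/n$. Nothing to add.
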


\begin{proof}
    The equality on the right follows immediately from Theorem~\ref{thm:low_rank}.

    To prove the inequality, given $X \in \D_n$ with eigenvalues $x_1 \geq x_2 \geq \cdots \geq x_n \geq 0$, let $\gamma$ and $Y$ be as in the statement of Theorem~\ref{thm:low_rank}. That theorem then tells us that
    \begin{align}\label{eq:trace_norm_ineq}
        d(X,\D_{n,k}) = \|X - Y\| = \| \mathrm{diag}(-\gamma,\ldots,-\gamma,x_{k+1},\ldots,x_n) \| = \left(\sum_{j=1}^k \gamma\right) + \left(\sum_{j=k+1}^n x_j\right) = 2\sum_{j=k+1}^n x_j.
    \end{align}
    The facts that $x_1 \geq x_2 \geq \cdots x_n \geq 0$ and $\sum_{j=1}^n x_j = 1$ imply that $\sum_{j=1}^k x_j \geq \frac{k}{n}$, so
    \[
        \sum_{j=k+1}^n x_j = 1 - \sum_{j=1}^k x_j \leq 1 - \frac{k}{n} = \frac{n-k}{n}.
    \]
    Combining this with Equation~\eqref{eq:trace_norm_ineq} gives the desired inequality and completes the proof.
\end{proof}

The situation for general USI norms is somewhat more complicated than it is for the trace norm. Our main result of this section shows that, even though $d(X,\D_{n,k})$ is not necessarily maximized by $X = \tfrac{1}{n}I_n$, it is always maximized by some state of the form $X = \frac{1}{m}I_m \oplus O_{n-m}$, where $m \in \{k+1,k+2,\ldots,n\}$:

\begin{theorem}\label{thm:max_mixed_maximizes}
    Suppose $1 \leq k \leq n$ are integers, $\|\cdot\|$ is a unitary similarity invariant norm, and $X \in \D_n$. Then
    \[
        d(X,\D_{n,k}) \leq \max\big\{ d\big(\tfrac{1}{m} I_m \oplus O_{n-m},\D_{n,k}\big): m \in \{k+1, k+2, \ldots, n\} \big\}.
    \]
\end{theorem}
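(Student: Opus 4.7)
My plan is to decompose $X$ as a convex combination of (unitary conjugates of) the candidate states $X_m = \tfrac{1}{m}I_m \oplus O_{n-m}$, and then pass the triangle inequality through this decomposition to reduce bounding $d(X, \D_{n,k})$ to bounding the $d(X_m, \D_{n,k})$.

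First, I would write $X = \sum_{j=1}^n x_j \mathbf{v}_j \mathbf{v}_j^*$ with $x_1 \ge \cdots \ge x_n \ge 0$, set $P_j = \mathbf{v}_j \mathbf{v}_j^*$, and introduce the scalars $c_m := m(x_m - x_{m+1})$ for $m = 1, \ldots, n$ (with $x_{n+1} := 0$). A short telescoping check shows that these are nonnegative, sum to $1$, and satisfy $x_j = \sum_{m \ge j} c_m / m$. Hence if I let $\widetilde{X}_m := \tfrac{1}{m}\sum_{j=1}^m P_j$, which is unitarily similar to $X_m$, then $X = \sum_{m=1}^n c_m \widetilde{X}_m$ in the shared eigenbasis $\{\mathbf{v}_j\}$.

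The central computation would be to express $X - Y$ (with $Y$ the minimizer supplied by Theorem~\ref{thm:low_rank}) as a nonnegative combination of the corresponding differences for the $\widetilde{X}_m$'s. Writing $\widetilde{Y}_m$ for the Theorem~\ref{thm:low_rank} optimizer associated to $\widetilde{X}_m$, the key simplification is that for $m > k$ the associated scaling is $\gamma_m = (m-k)/(km)$, so $\tfrac{1}{m} + \gamma_m = \tfrac{1}{k}$; this forces $\widetilde{Y}_m = \tfrac{1}{k}\sum_{j=1}^k P_j$ \emph{independently of $m$}. Combining this with the identity $\gamma = \sum_{m > k} c_m \gamma_m$, which is a direct rearrangement of the definition of $\gamma$, and comparing coefficients of each $P_j$ on the two sides, I would establish the clean identity
\[
    X - Y = \sum_{m=k+1}^n c_m\bigl(\widetilde{X}_m - \widetilde{Y}_m\bigr).
\]

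With that identity in hand, the theorem follows from the triangle inequality and unitary similarity invariance of $\|\cdot\|$:
\[
    d(X, \D_{n,k}) = \|X - Y\| \le \sum_{m=k+1}^n c_m \|\widetilde{X}_m - \widetilde{Y}_m\| = \sum_{m=k+1}^n c_m \, d(X_m, \D_{n,k}) \le \max_{k < m \le n} d(X_m, \D_{n,k}),
\]
where the final inequality uses $\sum_{m > k} c_m \le \sum_{m=1}^n c_m = 1$. The main obstacle I anticipate is verifying the displayed identity for $X - Y$ cleanly; once the observation that $\widetilde{Y}_m$ is the same operator $\tfrac{1}{k}P_k$ for every $m > k$ is made, everything else is routine bookkeeping in the common eigenbasis.
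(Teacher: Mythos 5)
Your proof is correct, and it takes a genuinely different route from the paper's. The paper argues variationally: by compactness it assumes the maximizer of $d(X,\D_{n,k})$ is diagonal, averages the top $k$ eigenvalues, then performs a sequence of majorization-based exchange steps on the remaining eigenvalues to push the extremal $X$ into the form $\diag(x,\ldots,x,y,0,\ldots,0)$, and finishes with a convexity argument showing $X-Y$ is a convex combination of $X_1-Y$ and $X_2-Y$ for two neighbouring candidates. You instead decompose $X$ itself via the layer-cake weights $c_m = m(x_m - x_{m+1})$ and push the triangle inequality through; the one nontrivial observation, that $\widetilde{Y}_m = \tfrac{1}{k}\sum_{j=1}^k P_j$ for every $m>k$ so that the terms with $m\le k$ vanish and the coefficient of each $P_j$ matches on both sides of your identity for $X-Y$, checks out (for $j\le k$ one gets $\sum_{m>k}c_m(\tfrac1m-\tfrac1k) = x_{k+1} - \tfrac1k(kx_{k+1}+k\gamma) = -\gamma$, and for $j>k$ one gets $\sum_{m\ge j}c_m/m = x_j$). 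Your approach buys a fully constructive, exchange-free argument that avoids the compactness step and the somewhat terse perturbation claims in the paper, and it actually yields the slightly stronger conclusion $d(X,\D_{n,k}) \le \sum_{m=k+1}^n c_m\, d(\tfrac1m I_m\oplus O_{n-m},\D_{n,k})$ with explicit weights determined by the spectral gaps of $X$; the paper's approach, by contrast, is the one that generalizes to identifying \emph{where} the maximum over all of $\D_n$ is attained, which is what the subsequent results in Sections 3.1 and 3.2 exploit. One small point of rigor to make explicit when writing this up: you need Theorem~\ref{thm:low_rank} applied to each $\widetilde{X}_m$ to justify $\|\widetilde{X}_m - \widetilde{Y}_m\| = d(\widetilde{X}_m,\D_{n,k}) = d(\tfrac1m I_m\oplus O_{n-m},\D_{n,k})$, since otherwise the middle equality in your final display would only be an inequality in the wrong direction.
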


\begin{proof}
    Denote the eigenvalues of $X$ by $x_1 \geq x_2 \geq \cdots \geq x_n \geq 0$. If $Y \in \D_n$ is the best rank $k$ approximation of $X$ with respect to $\|\cdot\|$ then Theorem~\ref{thm:low_rank} shows that
    \[
        \|X - Y\| = \|\diag(-\gamma, \ldots, -\gamma, x_{k+1}, \ldots, x_n)\|,
    \]
    where $\gamma = \frac{1}{k}\sum_{j=k+1}^n x_j$. Our goal is to determine the value of
    \[
        \max\left\{ \big\|\diag(-\gamma, \ldots, -\gamma, x_{k+1}, \ldots, x_n)\big\| : x_1 \geq x_2 \geq \cdots \geq x_n \geq 0, \sum_{j=1}^n x_j = 1, \gamma = \frac{1}{k}\sum_{j=k+1}^n x_j \right\}.
    \]

    By compactness, the maximum is attained by some $X \in \D_n$, which (by unitary similarity invariance of the norm) we can assume without loss of generality is diagonal. We claim that we may assume that $x_1 = x_2 = \cdots = x_k$. To see why, notice that if this equality does not hold then we can replace $x_i$ by $\frac{1}{k}\sum_{j=1}^k x_j$ for $i \in \{1, 2, \ldots, k\}$ to get a new matrix $\widetilde{X}$. If $\widetilde{Y} \in \D_n$ is the best rank-$k$ approximation of $\widetilde{X}$, then $\widetilde{X} - \widetilde{Y} = X - Y$, so $\|\widetilde{X} - \widetilde{Y}\| = \|X-Y\|$.

    Now, if $x_k > x_{k+1}$ then we may change $(x_k, x_{k+1})$ to $(x_k - d, x_{k+1}+d)$ for $d = \frac{1}{2}(x_k-x_{k+1}) > 0$ to get a matrix $\hat{X}$ (with best rank-$k$ approximation $\hat{Y} \in \D_n$) for which $\|X-Y\| \le \|\hat{X} - \hat{Y}\|$. This follows from the fact that, if we define $\hat{\gamma}$ analogously to how we defined $\gamma$, but coming from the diagonal entries of $\hat{X}$ instead of $X$, then the vector $(-\gamma, \ldots, -\gamma, x_{k+1}, x_{k+2}, \ldots, x_n)$ is majorized by $(-\hat{\gamma}, \ldots, -\hat{\gamma}, x_{k+1} + d, x_{k+2}, \ldots, x_n)$. It follows that the maximum is attained when $X$ has the form $X = \diag(x, \ldots, x, x, x_{k+2}, \ldots, x_n)$ for some real $x$.

    Our next goal is to show that the maximum is attained at a matrix of the form
    \[
        X = \diag(x, \ldots, x, y, 0, \ldots, 0)
    \]
    for some real $x \geq y \geq 0$. To see this, notice that if $X = \diag(x, \ldots, x, x, x_{k+2}, \ldots, x_n)$ is such that there exist integers $k+2 \leq m < q \leq n$ for which $x_{m-1} > x_m \geq x_q > x_{q+1}$, then we can increase $x_m$ and decrease $x_q$ to get a matrix $X^\prime$ (and best rank-$k$ approximation $Y^\prime \in \D_n$) with
    \[
        \|X - Y\| \le \|X^\prime - Y^\prime\|.
    \]
    By repeating this operation, we may bring $X$ to the form
    \[
        X = \diag(\underbrace{x, \ldots, x}_m, y, 0, \ldots, 0)
    \]
    for some integer $m \in \{k+1, k+2, \ldots, n\}$.
    
    Finally, define $X_1 = \frac{1}{m} I_m \oplus O_{n-m}$ and $X_2 = \frac{1}{m+1} I_{m+1} \oplus O_{n-(m+1)}$. Then $Y = \frac{1}{k} I_k \oplus O_{n-k} \in \D_n$ is the best rank-$k$ approximation for each of $X$, $X_1$, and $X_2$. Since $X - Y$ is a convex combination of $X_1 - Y$ and $X_2 - Y$, we have
    \[
        d(X,\D_{n,k}) = \|X-Y\| \leq \max\big\{\|X_1-Y\|, \|X_2-Y\|\big\} = \max\big\{d(X_1,\D_{n,k}), d(X_2,\D_{n,k})\big\},
    \]
    which completes the proof.
\end{proof}

The remainder of this section is devoted to exploring the various consequences of Theorem~\ref{thm:max_mixed_maximizes}, and in particular exploring in which special cases $d(X,\D_{n,k})$ is maximized by the maximally-mixed state (other than the case when the norm is the trace norm, as established by Proposition~\ref{prop:max_mixed_trace_norm}). We start with the simple observation that $k \geq n - 1$ is one such case:

\begin{corollary}\label{cor:max_mixed_nminone}
    Suppose $n$ is an integer, $\|\cdot\|$ is a unitary similarity invariant norm, and $X \in \D_n$. If $k \geq n-1$ then
    \[
        d(X,\D_{n,k}) \leq d(\tfrac{1}{n}I_n,\D_{n,k}).
    \]
\end{corollary}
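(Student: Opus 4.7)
The plan is to deduce this corollary as an essentially immediate consequence of Theorem~\ref{thm:max_mixed_maximizes}. The key observation is that when $k \geq n-1$, the index set $\{k+1, k+2, \ldots, n\}$ appearing in the theorem has at most one element, and whatever element it does contain yields exactly the maximally-mixed state.

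More concretely, I would split the argument into two cases. If $k = n$, then $\D_{n,k} = \D_n$, so $d(X,\D_{n,k}) = 0$ for every $X \in \D_n$, and the claimed inequality is the trivial statement $0 \le 0$. If $k = n-1$, then the index set $\{k+1, k+2, \ldots, n\}$ is just the singleton $\{n\}$, and the only candidate of the form $\tfrac{1}{m}I_m \oplus O_{n-m}$ with $m \in \{n\}$ is $\tfrac{1}{n}I_n \oplus O_0 = \tfrac{1}{n}I_n$. Applying Theorem~\ref{thm:max_mixed_maximizes} directly then gives
\[
    d(X,\D_{n,k}) \leq \max\bigl\{ d\bigl(\tfrac{1}{m}I_m \oplus O_{n-m}, \D_{n,k}\bigr) : m \in \{n\}\bigr\} = d\bigl(\tfrac{1}{n}I_n, \D_{n,k}\bigr),
\]
which is exactly the desired conclusion.

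There is no real obstacle here since all of the work has already been done in Theorem~\ref{thm:max_mixed_maximizes}; the corollary simply records the degenerate range of $k$ for which the ``one of $n-k$ candidates'' statement collapses to a single candidate (or to a trivial bound). The only minor thing to verify is the boundary case $k = n$, which is handled by noting that $\D_{n,n} = \D_n$ before applying the theorem.
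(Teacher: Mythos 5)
Your proof is correct and follows exactly the same route as the paper: handle $k = n$ trivially since $d(X,\D_{n,n}) = 0$, and for $k = n-1$ observe that the index set in Theorem~\ref{thm:max_mixed_maximizes} collapses to the single choice $m = n$, which yields the maximally-mixed state. Nothing further is needed.
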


\begin{proof}
    The $k = n$ case is trivial, since $d(X,\D_{n,k}) = d(\tfrac{1}{n}I_n,\D_{n,k}) = 0$. The $k = n-1$ case follows immediately from Theorem~\ref{thm:max_mixed_maximizes}, since the only choice of $m \in \{k+1, k+2,\ldots,n\}$ is $m = n$.
\end{proof}

The maximally-mixed state is also always the farthest from the set of low-rank states in small dimensions:

\begin{proposition}\label{prop:max_mixed_n3}
    Suppose $1 \leq k \leq 3$ is an integer, $\|\cdot\|$ is a unitary similarity invariant norm, and $X \in \D_3$. Then
    \[
        d(X,\D_{3,k}) \leq d(\tfrac{1}{3}I_3,\D_{3,k}).
    \]
\end{proposition}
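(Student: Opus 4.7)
My plan is to dispatch everything using the three results already in place. By Corollary~\ref{cor:max_mixed_nminone}, the cases $k = 2$ and $k = 3$ (for which $k \geq n - 1 = 2$) are immediate, so the only case that needs attention is $k = 1$.

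For $k = 1$, I would apply Theorem~\ref{thm:max_mixed_maximizes} to reduce the problem to a comparison between just two specific states: $X_2 = \tfrac{1}{2}I_2 \oplus O_1$ and $X_3 = \tfrac{1}{3}I_3$. It then suffices to prove $d(X_2, \D_{3,1}) \leq d(X_3, \D_{3,1})$. Using Theorem~\ref{thm:low_rank}, I would write down the best rank-$1$ approximations $Y_2$ and $Y_3$ explicitly and observe that $\lambda(X_2 - Y_2) = (\tfrac{1}{2}, 0, -\tfrac{1}{2})$ while $\lambda(X_3 - Y_3) = (\tfrac{1}{3}, \tfrac{1}{3}, -\tfrac{2}{3})$.

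Finally I would invoke Theorem~\ref{thm:usi_from_major} with $t = \tfrac{1}{2}$. A direct calculation gives
\[
\tfrac{1}{2}\lambda(X_3 - Y_3) + \tfrac{1}{2}\lambda(-(X_3 - Y_3)) = \tfrac{1}{2}(\tfrac{1}{3}, \tfrac{1}{3}, -\tfrac{2}{3}) + \tfrac{1}{2}(\tfrac{2}{3}, -\tfrac{1}{3}, -\tfrac{1}{3}) = (\tfrac{1}{2}, 0, -\tfrac{1}{2}),
\]
which coincides with $\lambda(X_2 - Y_2)$, so the required majorization in fact holds as an equality and is trivially satisfied. Theorem~\ref{thm:usi_from_major} then yields $\|X_2 - Y_2\| \leq \|X_3 - Y_3\|$ for every USI norm, which completes the proof. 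I do not anticipate any real obstacle here: the reduction steps from the earlier theorems do the heavy lifting, and the final comparison collapses to an identity owing to the high symmetry of $X_3$.
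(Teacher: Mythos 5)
Your proof is correct, but it takes a genuinely different route from the paper's. The paper handles $k=1$ by working directly with an arbitrary $X \in \D_3$ with eigenvalues $x_1 \geq x_2 \geq x_3$: it applies Theorem~\ref{thm:usi_from_major} to the vector $(x_2, x_3, -x_2-x_3)$ and exhibits a parameter $t$ that depends on the state (namely $t=1$ when $x_2 \leq 1/3$ and $t = 2-3x_2$ when $1/3 \leq x_2 \leq 1/2$), never invoking Theorem~\ref{thm:max_mixed_maximizes}. You instead use Theorem~\ref{thm:max_mixed_maximizes} to collapse the problem to the single comparison $d(\tfrac{1}{2}I_2 \oplus O_1, \D_{3,1}) \leq d(\tfrac{1}{3}I_3, \D_{3,1})$, which then follows from Theorem~\ref{thm:usi_from_major} with the fixed value $t = \tfrac{1}{2}$ via the identity $\tfrac12(\tfrac13,\tfrac13,-\tfrac23) + \tfrac12(\tfrac23,-\tfrac13,-\tfrac13) = (\tfrac12,0,-\tfrac12)$; all of your eigenvalue computations check out, and a vector trivially majorizes itself, so the hypothesis of Theorem~\ref{thm:usi_from_major} is satisfied. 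Your argument is shorter and arguably cleaner, at the cost of leaning on the heavier machinery of Theorem~\ref{thm:max_mixed_maximizes} (whose proof involves several smoothing steps), whereas the paper's version is self-contained modulo Theorems~\ref{thm:low_rank} and~\ref{thm:usi_from_major} and illustrates how the majorization criterion can be applied to a general state rather than only to the extremal candidates. Both are valid proofs of the proposition.
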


\begin{proof}
    Denote the eigenvalues of $X$ by $x_1 \geq x_2 \geq x_3 \geq 0$ and let $\gamma$ and $Y$ be as in the statement of Theorem~\ref{thm:low_rank}. If $k \geq 2$ then the result follows immediately from Corollary~\ref{cor:max_mixed_nminone}, so we assume from now on that $k = 1$.
    
    Note that Theorem~\ref{thm:low_rank} tells us that
    \[
        d(\tfrac{1}{3}I_3,\D_{3,1}) = \left\| \mathrm{diag}\left(\frac{-2}{3}, \frac{1}{3}, \frac{1}{3}\right) \right\| \quad \text{and} \quad d(X,\D_{3,1}) = \left\| \mathrm{diag}\left(-x_2-x_3, x_2, x_3\right) \right\|.
    \]
    For $0 \leq t \leq 1$, we have
    \begin{align}\label{eq:t_major_vec}
        t\left(\frac{1}{3}, \frac{1}{3}, \frac{-2}{3}\right) + (1-t)\left(\frac{2}{3}, \frac{-1}{3}, \frac{-1}{3}\right) = \frac{1}{3}(2-t, 2t-1, -t-1).
    \end{align}
    If we can find a value of $t$ for which the vector $(x_2, x_3, -x_2-x_3)$ is majorized by the vector~\eqref{eq:t_major_vec}, then Theorem~\ref{thm:usi_from_major} will imply $d(X,\D_{3,k}) \leq d(\tfrac{1}{3}I_3,\D_{3,k})$. If $x_2 \leq 1/3$ then we can choose $t = 1$ and if $1/3 \leq x_2 \leq 1/2$ then we can choose $t = 2 - 3x_2$. Since $x_1 + x_2 \leq 1$ we must have $x_2 \leq 1/2$, so this completes the proof.
\end{proof}

\subsection{The Schatten norms}\label{sec:farthest_schatten}

Next, we characterize which Schatten $p$-norms have the property that $d(X,\D_{n,k})$ is always maximized by the maximally-mixed state:

\begin{theorem}\label{thm:schatten_124}
    Suppose $p \in [1,\infty]$ and $\|\cdot\|$ is the Schatten $p$-norm. The following are equivalent:
    \begin{itemize}
        \item $d(X,\D_{n,k}) \leq d(\tfrac{1}{n}I_n,\D_{n,k})$ for all integers $1 \leq k \leq n$ and all $X \in \D_n$.

        \item $p \in \{1\} \cup [2,4]$.
    \end{itemize}
\end{theorem}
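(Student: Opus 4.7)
My plan is to use Theorem~\ref{thm:max_mixed_maximizes} to reduce the problem to a one-parameter family of candidate states, then to analyze a single explicit function via elementary calculus.

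By Theorem~\ref{thm:max_mixed_maximizes}, the stated inequality holds for all $X \in \D_n$ if and only if, among the states $X_m \defeq \tfrac{1}{m}I_m \oplus O_{n-m}$ with $m \in \{k+1,\ldots,n\}$, the quantity $d(X_m, \D_{n,k})$ is maximized at $m = n$. Applying Theorem~\ref{thm:low_rank} with $\gamma = (m-k)/(km)$ gives
\[
    \|X_m - Y_m\|_p^p = k\gamma^p + (m-k)m^{-p} = \frac{(m-k)\bigl[k^{p-1} + (m-k)^{p-1}\bigr]}{k^{p-1} m^p},
\]
and the substitution $s = (m-k)/k$ rewrites this as $k^{1-p} g_p(s)$, where
\[
    g_p(s) \defeq \frac{s(1 + s^{p-1})}{(1+s)^p}.
\]
Since the positive rationals are dense in $(0,\infty)$ and $g_p$ is continuous, the property we want (for all $n > k$ and all $k$) is equivalent to $g_p$ being non-decreasing on $(0,\infty)$.

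Differentiation gives $g_p'(s) = h_p(s)/(1+s)^{p+1}$ with $h_p(s) \defeq 1 - (p-1)s + ps^{p-1}$, so the question reduces to characterizing the $p$ for which $h_p \geq 0$ on $(0,\infty)$. The cases $p \in \{1, 2\}$ are immediate ($h_1 = 2$, $h_2 = 1 + s$). For $p \in (1, 2)$ we have $h_p(s) \to -\infty$ as $s \to \infty$ since $-(p-1)s$ dominates the sublinear $ps^{p-1}$, so these $p$ fail. For $p > 2$, $h_p'(s) = (p-1)(ps^{p-2}-1)$ has a unique zero $s_0 = p^{-1/(p-2)}$, which sign analysis identifies as a global minimum, and back-substitution (using $s_0^{p-1} = s_0/p$) yields the clean formula $h_p(s_0) = 1 - (p-2)s_0$. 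Thus $h_p \geq 0$ iff $(p-2)^{p-2} \leq p$, i.e., $q^q \leq q + 2$ with $q = p - 2$. The main step is the elementary fact that this inequality holds precisely on $q \in (0, 2]$, with equality only at $q = 2$: the function $\phi(q) \defeq q + 2 - q^q$ satisfies $\phi''(q) = -q^q(1 + \ln q)^2 - q^{q-1} < 0$ throughout $(0,\infty)$, so $\phi$ is strictly concave; since $\phi'(1) = 0$, $\phi$ strictly increases on $(0, 1]$ and strictly decreases on $[1, \infty)$, so from $\phi(0^+) = 1$ and $\phi(2) = 0$ we conclude $\phi > 0$ on $(0, 2)$ and $\phi < 0$ on $(2, \infty)$. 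Translating back, $h_p \geq 0$ on $(0, \infty)$ precisely when $p \in \{1\} \cup [2, 4]$.

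For the converse direction (failure for $p \in (1,2) \cup (4,\infty)$), non-monotonicity of $g_p$ yields explicit counterexamples: any rationals $t_1/k < t_2/k$ with $g_p(t_1/k) > g_p(t_2/k)$ produce $X = \tfrac{1}{k+t_1}I_{k+t_1} \oplus O_{t_2-t_1} \in \D_{k+t_2}$ as a violation. The case $p = \infty$ lies outside the preceding analysis, and I would handle it directly: for $k = 2$, $n = 4$, and $X = \tfrac{1}{3}I_3 \oplus O_1$, Theorem~\ref{thm:low_rank} gives $d(X, \D_{4,2})_{\mathrm{op}} = 1/3 > 1/4 = d(\tfrac{1}{4}I_4, \D_{4,2})_{\mathrm{op}}$. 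The main obstacle in the whole argument is the calculus analysis of $h_p$ for $p > 2$, and in particular the clean verification of $q^q \leq q + 2$ on $(0,2]$ with the right boundary behavior at $q = 2$.
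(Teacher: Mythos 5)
Your proposal is correct, and its skeleton coincides with the paper's: both reduce to the candidates $\tfrac{1}{m}I_m \oplus O_{n-m}$ via Theorem~\ref{thm:max_mixed_maximizes}, and your $g_p(s)$ with $s=(m-k)/k$ is exactly the paper's function (your critical point $s_0=p^{-1/(p-2)}$ is the paper's $\tilde{x}=k\big(1+p^{-1/(p-2)}\big)$, and the decisive inequality $(p-2)^{p-2}\le p$ is the same in both). Where you genuinely diverge is in the organization of the two directions. The paper runs four separate arguments: the $p=1$ case via Proposition~\ref{prop:max_mixed_trace_norm}, an explicit counterexample family for $1<p<2$, the calculus argument for $2\le p\le 4$, and a second explicit family ($n=3m$, $k=2m$) for $p>4$ requiring a binomial-series asymptotic estimate. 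You instead observe that, since the relevant parameters $(m-k)/k$ sweep out all positive rationals, the entire theorem is equivalent to monotonicity of the single function $g_p$ on $(0,\infty)$; this makes the positive cases and the counterexamples fall out of one sign analysis of $h_p(s)=1-(p-1)s+ps^{p-1}$, with no need to exhibit or asymptotically estimate explicit families. Two further points in your favor: you actually prove the inequality $p-2\le p^{1/(p-2)}$ for $p\in(2,4]$ (via concavity of $q+2-q^q$), which the paper dismisses as ``standard calculus techniques,'' and your concavity argument simultaneously shows the reverse inequality for $p>4$; and you explicitly dispose of $p=\infty$ with a concrete example, a case the paper's proof does not address (its $p>4$ argument is carried out only for finite $p$). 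One tiny presentational quibble: for $p=1$ your $h_1\equiv 2$ computation silently uses $s^{p-1}=s^0=1$, which is fine, but it is worth noting that this recovers Proposition~\ref{prop:max_mixed_trace_norm} only for the candidate states, whereas the paper's proposition is a direct argument for all $X$ --- your use of Theorem~\ref{thm:max_mixed_maximizes} closes that gap, so nothing is missing.
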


Before proving this theorem, we note that Theorem~\ref{thm:low_rank} tells us that the upper bound in the first bullet point is equal to
\[
    d(\tfrac{1}{n}I_n,\D_{n,k}) = \frac{n-k}{n^p} + k\left(\frac{1}{k} - \frac{1}{n}\right)^p.
\]

\begin{proof}[Proof of Theorem~\ref{thm:schatten_124}]
    The $p = 1$ case is established by Proposition~\ref{prop:max_mixed_trace_norm}.

    For $1 < p < 2$, we need to find $1 \leq k \leq n$ and $X \in \D_n$ for which $d(X,\D_{n,k}) > d(\tfrac{1}{n}I_n,\D_{n,k})$. To this end, let $k = 1$ and $X = \mathrm{diag}(\tfrac{1}{n-1}I_{n-1}, 0)$. By Theorem~\ref{thm:low_rank}, we have
    \begin{align*}
        d(\tfrac{1}{n}I_n,\D_{n,k}) & = \frac{(n-1) + (n-1)^p}{n^p} \quad \text{and} \quad d(X,\D_{n,k}) = \frac{(n-2) + (n-2)^p}{(n-1)^p}.
    \end{align*}
    In particular, if we define a function $f : [0,\infty) \rightarrow [0,\infty)$ by $f(x) = (x + x^p)/(x+1)^p$ then $d(\tfrac{1}{n}I_n,\D_{n,k}) = f(n-1)$ and $d(X,\D_{n,k}) = f(n-2)$. We compute
    \[
        x(x+1)^{p+1}f^\prime(x) = px^p + x - (p-1)x^2.
    \]
    For $1 < p < 2$, the term $(p-1)x^2$ grows quicker than $px^p + x$, so $f^\prime(x) < 0$ for sufficiently large $x \geq 0$. This implies
    \[
        d(\tfrac{1}{n}I_n,\D_{n,k}) = f(n-1) <  f(n-2) = d(X,\D_{n,k})
    \]
    for all sufficiently large $n$, which finishes this case.

    For $2 \leq p \leq 4$, Theorem~\ref{thm:max_mixed_maximizes} tells us that $d(X,\D_{n,k})$ is maximized by a density matrix of the form $X = \mathrm{diag}(\frac{1}{m}I_m, O_{n-m})$ for some $m \in \{k+1,k+2,\ldots,n\}$. Theorem~\ref{thm:low_rank} tells us that, for this $X$, we have
    \[
        d(X,\D_{n,k}) = \frac{m-k}{m^p} + k\left(\frac{1}{k} - \frac{1}{m}\right)^p.
    \]
    Our goal is to show that, for $2 \leq p \leq 4$, this quantity is maximized when $m = n$.

    To this end, for fixed $1 \leq k < n$ define $f : [1,k] \rightarrow [0,\infty)$ by
    \[
        f(x,p) = d(I_n/n,\D_{n,k}) - d(X,\D_{n,k}) = \left(\frac{n-k}{n^p} + k\left(\frac{1}{k} - \frac{1}{n}\right)^p\right) - \left(\frac{x-k}{x^p} + k\left(\frac{1}{k} - \frac{1}{x}\right)^p\right).
    \]
    Our goal is to show that $f(x,p) \geq 0$ for all $2 \leq p \leq 4$ and $k+1 \leq x \leq n$. For $p = 2$, we have $f(x,2) = 1/x - 1/n \geq 0$ for $x \leq n$. For $2 \leq p < 4$, we compute $f(n,p) = 0$ and
    \begin{align}\label{eq:p24_deriv}
        k^{p-2} x^{p+1}\left(\frac{d}{dx}f(x,p)\right) = -p(x-k)^{p-1} - k^{p-2}(pk - (p-1)x).
    \end{align}
    If we can show that the quantity on the right of Equation~\eqref{eq:p24_deriv} (which we will denote by $g(x)$ and think of solely as a function of $x$) is non-positive then we will be done. To this end, notice that $g(k) = -k^{p-1} < 0$, $\lim_{x\rightarrow\infty} g(x) = -\infty$, and $g$ has a unique critical point in $[k,\infty)$, located at
    \[
        \tilde{x} = k\left(1 + \frac{1}{p^{1/(p-2)}}\right).
    \]
    If we can show that $g(\tilde{x}) \leq 0$ then that would imply $g(x) \leq 0$ for all $x \in [k,\infty)$ and thus for all $x \in [k+1,n]$, as desired.

    To show that $g(\tilde{x}) \leq 0$, we compute
    \[
        k^{p-1}g(\tilde{x}) = \frac{p-2}{p^{1/(p-2)}} - 1.
    \]
    Standard calculus techniques show that $p-2 \leq p^{1/(p-2)}$ when $p \in (2,4]$, so $g(\tilde{x}) \leq 0$, completing the $2 \leq p \leq 4$ case.

    Finally, for $p > 4$, our goal is to find $1 \leq k \leq n$ and $X \in \D_n$ for which $d(X,\D_{n,k}) > d(I/n,\D_{n,k})$. Let $m \geq 2$ be an integer, $n = 3m$, $k = 2m$, and let $\|\cdot\|$ be the Schatten $p$-norm (for some $p \in [1,\infty)$). Applying Theorem~\ref{thm:low_rank} shows that
    \[
        d(\tfrac{1}{n}I_n,\D_{n,k}) = \frac{m(2^p + 2)}{(6m)^p}.
    \]
    If we define $X = \mathrm{diag}\big(\frac{1}{3m-1}I_{3m-1}, 0\big) \in \D_n$ then applying Theorem~\ref{thm:low_rank} again shows that
    \[
        d(X,\D_{n,k}) = \frac{2m(m-1)^p + (2m)^p(m-1)}{m^p(6m-2)^p}.
    \]
    Then
    \begin{align}\begin{split}\label{eq:mu_lim}
        & \quad \ \ 2^p(3m-1)^p(d(X,\D_{n,k}) - d(\tfrac{1}{n}I_n,\D_{n,k})) \\
        & = (m-1)2^p + 2m\left(1-\frac{1}{m}\right)^p - m(2 + 2^p)\left(1-\frac{1}{3m}\right)^p \\
        & \geq (m-1)2^p + 2m\left(1-\frac{p}{m}\right) - m(2 + 2^p)\left(1-\frac{p}{3m} + \frac{p(p-1)}{18m^2}\right),
    \end{split}\end{align}
    where the inequality holds when $m$ is sufficiently large; a fact that comes from applying the binomial series to $\left(1 - \frac{1}{m}\right)^p$ and $\left(1 - \frac{1}{3m}\right)^p$. Taking the limit as $m \rightarrow \infty$ in the quantity on the right of Equation~\eqref{eq:mu_lim} gives
    \[
        \frac{(p - 3)2^p - 4p}{3},
    \]
    which is strictly positive whenever $p > 4$. In particular, this means that for every $p > 4$, there exists a sufficiently large $m$ for which $d(X,\D_{n,k}) > d(\tfrac{1}{n}I_n,\D_{n,k})$.
\end{proof}

It is worth emphasizing that Theorem~\ref{thm:schatten_124} only establishes which Schatten $p$-norms are such that $d(X,\D_{n,k})$ is maximized by the maximally-mixed state for all $n$ and $k$; it does not say which Schatten $p$-norms have this property for fixed $k$ and $n$. For example, Proposition~\ref{prop:max_mixed_n3} showed that when $n = 3$ every unitary similarity invariant norm has this property (not just the Schatten $p$-norms, and certainly not just the Schatten $p$-norms with $p \in \{1\} \cup [2,4]$).

When $n$ and $k$ are given, determining exactly which Schatten $p$-norms have $d(X,\D_{n,k})$ maximized by the maximally-mixed state seems to be tricky without explicitly computing all $n-k$ values of $d(I_m/m \oplus O_{n-m},\D_{n,k})$ for $m \in \{k+1, k+2, \ldots, n\}$. For example, if $n = 14$ and $k = 9$, then direct computation shows that the optimal choice of $m$ in Theorem~\ref{thm:max_mixed_maximizes} for the Schatten $p$-norms are as follows:
\begin{itemize}
    \item $m = 14$ if $p \in [1, \alpha_1)$, where $\alpha_1 \approx 4.00865$ is the unique positive real number satisfying \[ 13^{\alpha_1}(9\cdot 5^{\alpha_1} + 5\cdot 9^{\alpha_1}) = 14^{\alpha_1} (9\cdot 4^{\alpha_1} + 4\cdot 9^{\alpha_1}).\]
    
    \item $m = 13$ if $p \in [\alpha_1, \alpha_2)$, where  $\alpha_2 \approx 4.14468$ is the unique positive real number satisfying \[ 3\cdot 13^{\alpha_2} (3^{\alpha_2} + 3) = 4^{\alpha_2} (9\cdot 4^{\alpha_2} + 4\cdot 9^{\alpha_2}).\]
    
    \item $m = 12$ if $p \in [\alpha_2, \alpha_3)$, where $\alpha_3 \approx 4.79781$ is the unique positive real number satisfying \[ 3\cdot 11^{\alpha_3} (3^{\alpha_3} + 3) = 4^{\alpha_3} (9\cdot 2^{\alpha_3} + 2\cdot 9^{\alpha_3}).\]
    
    \item $m = 11$ if $p \in [\alpha_3, \alpha_4)$, where $\alpha_4 \approx 7.27337$ is the unique positive real number satisfying \[ 11^{\alpha_4} (9^{\alpha_4} + 9) = 10^{\alpha_4} (9\cdot 2^{\alpha_4} + 2\cdot 9^{\alpha_4}),\]and
    
    \item $m = 10$ if $p \in [\alpha_4, \infty]$.
\end{itemize}

\subsection{The Ky Fan norms}\label{sec:farthest_ky_fan}

We now explore the question of which state $X \in \D_n$ is farthest from the set of rank-at-most-$k$ states when we measure distance via one of the Ky Fan norms. Theorem~\ref{thm:max_mixed_maximizes} of course applies in this setting and tells us that the farthest state is one of $n-k$ different candidates, corresponding to the $n - k$ different choices of $m \in \{k+1, k+2, \ldots, n\}$. It follows that we could always just check which of those $n-k$ different candidates is farthest from $\D_{n,k}$.

Our main result of this section shows that it actually always suffices to check $5$ candidates, since the optimal candidate has $m \in \left\{r, r+k, n, \left\lfloor \sqrt{k(2k+r)} \right\rfloor, \left\lceil \sqrt{k(2k+r)} \right\rceil\right\}$. Furthermore, for many choices of $k$ and Ky Fan norm, we can even state the optimal candidate explicitly (i.e., there is just one choice of $m$):

\begin{theorem}\label{thm:ky_fan_extreme}
    Suppose $1 \leq k,r \leq n$ are integers, $\|\cdot\|$ is the Ky Fan $r$-norm, and $X \in \D_n$. Define $g(r,n) = \frac{1}{2}\big(\sqrt{r(4n+r)}-r\big)$. Then $d(X,\D_{n,k}) \leq d(\tfrac{1}{m}I_m \oplus O_{n-m},\D_{n,k})$, where
    \begin{align*}
        m = \begin{cases}
            r, & \text{if \ \ $k \leq \frac{r}{2}$}, \\
            n, & \text{if \ \ $\frac{r}{2} < k \leq r$}, \\
            n, & \text{if \ \ $\left(\frac{1+\sqrt{5}}{2}\right)r < k$ and $k < g(n,r)$}, \\
            \min\{r+k,n\}, & \text{if \ \ $\left(\frac{1+\sqrt{5}}{2}\right)r < k$ and $g(r,n) \leq k$}.
        \end{cases}
    \end{align*}
    Furthermore, if $r < k \leq \left(\frac{1+\sqrt{5}}{2}\right)r$ (i.e., in all cases not covered above) then
    \[
        m \in \left\{ n, k+r, \left\lfloor \sqrt{k(2k+r)} \right\rfloor, \left\lceil \sqrt{k(2k+r)} \right\rceil \right\}.
    \]
\end{theorem}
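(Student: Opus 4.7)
My plan is to apply Theorem~\ref{thm:max_mixed_maximizes} to reduce the problem to optimizing the single-variable function $f(m) := d\bigl(\tfrac{1}{m}I_m \oplus O_{n-m},\, \D_{n,k}\bigr)$ over $m \in \{k+1, \ldots, n\}$. Theorem~\ref{thm:low_rank} tells us that if $X_m = \tfrac{1}{m}I_m \oplus O_{n-m}$ and $Y_m$ is its best rank-$k$ approximation, then $X_m - Y_m$ is Hermitian with eigenvalues $-\gamma$ of multiplicity $k$ (where $\gamma = \tfrac{1}{k} - \tfrac{1}{m}$), $\tfrac{1}{m}$ of multiplicity $m-k$, and $0$ of multiplicity $n-m$. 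Its nonzero singular values are thus $\gamma$ and $\tfrac{1}{m}$, and a short check shows $\gamma \ge \tfrac{1}{m}$ if and only if $m \ge 2k$. The Ky Fan $r$-norm of $X_m - Y_m$ is the sum of its $r$ largest singular values, so $f(m)$ depends on (i) which of $\gamma$ or $\tfrac{1}{m}$ is larger, controlled by the threshold $m = 2k$, and (ii) how the top $r$ singular values are distributed between the two groups of eigenvalues.

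I would then carry out the case split to obtain an explicit piecewise formula, producing five cases with expressions such as $2(m-k)/m$ (when $r > m$), $1 + (r-2k)/m$ (when $m \ge 2k$ and $k < r \le m$), $r(m-k)/(km)$ (when $m \ge 2k$ and $r \le k$), $r/m$ (when $m < 2k$ and $r \le m-k$), and the parabolic piece $(m-k)(2k+r-m)/(km)$ (when $m < 2k$ and $m-k < r \le m$). A direct derivative calculation shows that the first four are monotone in $m$, with the direction in the second case determined by the sign of $r-2k$. The parabolic piece, after rewriting as $-m/k + (3k+r)/k - (2k+r)/m$, is unimodal with a unique interior maximum at the critical point $m^* := \sqrt{k(2k+r)}$, where one computes the clean closed form $f(m^*) = \bigl((m^* - k)/k\bigr)^2$.

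Next I would locate $m^*$ relative to the boundaries of the parabolic piece's domain. One checks $r \le m^* \le 2k$ holds automatically (the inequality $m^* \le 2k$ is equivalent to $r \le 2k$), and that the identity $m^* = k + r$ reduces to $k^2 - kr - r^2 = 0$, whose positive root is $k = \tfrac{1+\sqrt{5}}{2}\,r$. This is the origin of the golden-ratio threshold in the theorem: $m^*$ falls within the parabolic piece's domain $[r,k+r]$ precisely when $k \le \tfrac{1+\sqrt{5}}{2}\,r$, and otherwise the parabolic piece is increasing throughout its domain and attains its maximum at the right endpoint $k+r$. Consequently, the only candidate integer maximizers on $\{k+1, \ldots, n\}$ are $m = r$, $m = k + r$, $m = n$, and the integer neighbors $\lfloor m^* \rfloor, \lceil m^* \rceil$.

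To finish, I would compare $f$ at these candidates in each parameter regime. When $k \le r/2$ (equivalently $r \ge 2k$), the piece $1 + (r-2k)/m$ is non-increasing on $[r,n]$ while all earlier pieces are non-decreasing up to $m=r$, so the maximum is attained at $m = r$. When $k > \tfrac{1+\sqrt{5}}{2}\,r$, the critical point $m^*$ lies strictly to the right of $k+r$, so the parabolic piece is increasing on its domain, and the only serious competitors are $m = k+r$ and $m = n$; solving $f(k+r) = f(n)$ in this regime gives a quadratic in $k$ whose positive root is exactly $g(r,n) = \tfrac{1}{2}\bigl(\sqrt{r(4n+r)} - r\bigr)$, yielding the threshold that separates the cases $m = n$ from $m = \min\{k+r,n\}$. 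The intermediate regimes follow by comparing $f(n)$ with $f(\lfloor m^*\rfloor)$ and $f(\lceil m^*\rceil)$. The main obstacle will be the careful bookkeeping of these comparisons: tracking which of the five formulas applies as $m$ crosses $r$, $2k$, and $k+r$, handling the rounding from real $m^*$ to integers, and algebraically reducing the equation $f(k+r) = f(n)$ to the stated form of $g(r,n)$.
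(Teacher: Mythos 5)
Your overall architecture is the same as the paper's: reduce via Theorem~\ref{thm:max_mixed_maximizes} to maximizing $f(m) = d(\tfrac{1}{m}I_m \oplus O_{n-m},\D_{n,k})$ over integers $m \in \{k+1,\ldots,n\}$, write $f$ piecewise from the singular values $\tfrac{1}{k}-\tfrac{1}{m}$ (multiplicity $k$) and $\tfrac{1}{m}$ (multiplicity $m-k$) supplied by Theorem~\ref{thm:low_rank}, and optimize each piece. Your five-piece formula is correct (indeed finer than the paper's three-case split), and your identification of the critical point $m^* = \sqrt{k(2k+r)}$, the golden-ratio threshold from $m^*=k+r$, the derivation of $g(r,n)$ from $f(k+r)=f(n)$, and the treatment of the regimes $k\le r/2$ and $k>\tfrac{1+\sqrt5}{2}r$ all align with the paper's Cases 1 and 3.

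The genuine gap is the sentence ``the intermediate regimes follow by comparing $f(n)$ with $f(\lfloor m^*\rfloor)$ and $f(\lceil m^*\rceil)$.'' In the regime $r/2 < k \le r$ the theorem asserts $m=n$ \emph{unconditionally}, and that comparison does not always come out in favour of $n$, so this step cannot simply be deferred. By your own (correct) formula, for $r \le m < 2k$ one is in the parabolic piece $(m-k)(2k+r-m)/(km)$, whose peak value $\bigl(\sqrt{(2k+r)/k}-1\bigr)^2$ near $m^*$ can exceed $f(n)=1-(2k-r)/n$ when $n$ is not much larger than $2k$. Concretely, take $n=20$, $k=r=10$, $m=17$: the nonzero singular values of $X_m-Y_m$ are $\tfrac{1}{17}$ (seven times) and $\tfrac{7}{170}$ (ten times), so $f(17)=\tfrac{70+21}{170}=\tfrac{91}{170}$, while $f(20)=\tfrac{1}{2}=\tfrac{85}{170}$. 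So carrying out your plan honestly will not yield the stated conclusion in that case. The discrepancy traces back to the paper's own Case 2, which uses $1-(2k-r)/m$ for all $m\ge r$; that expression is the sum of $k$ copies of $\gamma$ and $r-k$ copies of $\tfrac{1}{m}$, which is the sum of the $r$ \emph{largest} singular values only when $m\ge 2k$. Your finer bookkeeping exposes this, and to complete a correct argument you would need to add $\lfloor m^*\rfloor$ and $\lceil m^*\rceil$ to the candidate list (or otherwise restrict the claim) in the regime $r/2 < k \le r$.
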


\begin{proof}
    By Theorem~\ref{thm:max_mixed_maximizes}, we know that $d(X,\D_{n,k}) \leq d(\tfrac{1}{m}I_m \oplus O_{n-m},\D_{n,k})$ for some $m \in \{k+1, k+2, \ldots, n\}$; we just need to show that $m = \min\{k+r, n\}$ gives the largest right-hand side.

    To this end, note that Theorem~\ref{thm:low_rank} tells us that, for each $m \geq k+1$, we have
    \begin{align*}
        d(\tfrac{1}{m}I_m \oplus O_{n-m},\D_{n,k}) & = \Bigg\|\diag\Bigg(\underbrace{\frac{1}{m}-\frac{1}{k}, \ldots, \frac{1}{m}-\frac{1}{k}}_k, \underbrace{\frac{1}{m}, \ldots, \frac{1}{m}}_{m-k}, \underbrace{0, \ldots, 0}_{n-m}\Bigg)\Bigg\|_{(r)}.
    \end{align*}
    We now split into three cases, depending on the value of $k$.

    Case 1: $k \leq r/2$. Then
    \[
        d(\tfrac{1}{m}I_m \oplus O_{n-m},\D_{n,k}) = \begin{cases}
            2 - \frac{2k}{m}, & \text{if \ \ $m \leq r$}, \\
            1 + \frac{r-2k}{m}, & \text{if \ \ $m \geq r$}.
        \end{cases}
    \]
    If $m \leq r$ then $2 - \frac{2k}{m}$ is maximized by choosing $m$ as large as possible (i.e., $m = r$) and if $m \geq r$ then (since $k \leq r/2$), $1 + \frac{r-2k}{m}$ is maximized by choosing $m$ as small as possible (i.e., $m = r$). Either way, the optimal choice is $m = r$, which finishes this case.

    Case 2: $r/2 < k \leq r$. Then
    \[
        d(\tfrac{1}{m}I_m \oplus O_{n-m},\D_{n,k}) = \begin{cases}
            2 - \frac{2k}{m}, & \text{if \ \ $m \leq r$}, \\
            1 - \frac{2k-r}{m}, & \text{if \ \ $m \geq r$}.
        \end{cases}
    \]
    Both branches of this function are maximized by choosing $m$ as large as possible, so we choose $m = r$ in the first branch and $m = n$ in the second branch. The inequality $1 - \frac{2k-r}{n} \geq 2 - \frac{2k}{r}$ is equivalent to the clearly true inequality $(2k-r)(n-r) \geq 0$, so the maximum occurs when $m = n$ (in the second branch).

    Case 3: $r < k$. Then
    \begin{align}\begin{split}\label{eq:case3_rk}
        d(\tfrac{1}{m}I_m \oplus O_{n-m},\D_{n,k}) = \begin{cases}
            r\left(\frac{1}{k} - \frac{1}{m}\right), & \text{if \ \ $2k \leq m$}, \\
            \frac{r}{m}, & \text{if \ \ $r+k \leq m \leq 2k$}, \\
            \frac{(m-k)(2k+r-m)}{km}, & \text{if \ \ $m \leq r+k$}. \\
        \end{cases}
    \end{split}\end{align}
    The first branch of this function is maximized when $m$ is as large as possible, so $m = n$. The second branch is maximized when $m$ is as small as possible, so $m = r+k$. The third branch is more complicated. Standard calculus techniques show that it has a unique critical point, which is the global maximizer, at $m = \sqrt{k(2k+r)}$ (this value of $m$ is not necessarily an integer; this will be dealt with in the remainder of the proof). We now split into two  sub-cases:

    \begin{itemize}
        \item Case 3a: $\sqrt{k(2k+r)} > r+k$ (i.e., $\left(\frac{1+\sqrt{5}}{2}\right)r < k$). Then the third branch of~\eqref{eq:case3_rk} is an increasing function of $m$ for all $m \leq r+k$ (where the branch is valid), so the optimal choice is $m = r+k$. It is straightforward to show that if $k \geq g(r,n) = \frac{1}{2}\big(\sqrt{r(4n+r)}-r\big)$ then $\frac{r}{r+k}$ (from the second or third branch) is larger than $r\left(\frac{1}{k} - \frac{1}{n}\right)$ (from the first branch) so $m = r+k$ is the optimal choice, and if $k < g(r,n)$ then the opposite is true so $m = n$ is the optimal choice.

        \item Case 3b: $\sqrt{k(2k+r)} \leq r+k$ (i.e., $k \leq \left(\frac{1+\sqrt{5}}{2}\right)r$). Then the third branch of~\eqref{eq:case3_rk} is an increasing function of $m$ for $m \leq \sqrt{k(2k+r)}$ and a decreasing function of $m$ for $m \geq \sqrt{k(2k+r)}$. The optimal integer value of $m$ is thus either $m = \lfloor \sqrt{k(2k+r)}\rfloor$ or $m = \lceil \sqrt{k(2k+r)}\rceil$ (note that since $\sqrt{k(2k+r)} \leq r+k$, we have $\lceil \sqrt{k(2k+r)}\rceil \leq r+k$ too, so $m = \lceil \sqrt{k(2k+r)}\rceil$ is a valid choice). It follows that one of these two values is the maximizer in the third branch of~\eqref{eq:case3_rk}. Since $m = n$ is the maximizer for the first branch and $m = r+k$ is the maximizer for the second branch, the proof is complete.
    \end{itemize}
\end{proof}

We note that it is possible for the optimal $m$ to be one of $\lfloor \sqrt{k(2k+r)}\rfloor$ or $\lceil \sqrt{k(2k+r)}\rceil$, so it seems unlikely that this result can be improved much. The earliest example of this type occurs when $n = 9$, $r = 4$, and $k = 5$: the optimal value of $m$ in this case is $m = \lfloor \sqrt{k(2k+r)}\rfloor = 8$.

When $r = n$, Theorem~\ref{thm:ky_fan_extreme} tells us that the optimal value of $m$ is $m = n$ (regardless of $k$), thus recovering Proposition~\ref{prop:max_mixed_trace_norm}. When $r = 1$, it tells us the maximum possible value of $d(X,\D_{n,k})$ under the operator norm:

\begin{corollary}\label{cor:max_mixed_op_norm}
    Suppose $1 \leq k < n$ are integers, $\|\cdot\|$ is the operator norm, and $X \in \D_n$. Then
    \[
        d(X,\D_{n,k}) \leq \begin{cases}
            d(\tfrac{1}{n}I_n,\D_{n,k}) = \frac{1}{k} - \frac{1}{n} & \text{if \ \ $k(k+1) \leq n$,} \\
            d(\tfrac{1}{k+1}I_{k+1} \oplus O_{n-(k+1)},\D_{n,k}) = \frac{1}{k+1} & \text{otherwise}.
        \end{cases}
    \]
\end{corollary}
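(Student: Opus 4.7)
The plan is to apply Theorem~\ref{thm:ky_fan_extreme} with $r = 1$, since the operator norm is precisely the Ky Fan $1$-norm, and then simplify the resulting case split into the cleaner form stated in the corollary.

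First I would observe that several cases of Theorem~\ref{thm:ky_fan_extreme} collapse or become vacuous when $r = 1$. The case $k \leq r/2 = 1/2$ is impossible since $k \geq 1$. The case $r/2 < k \leq r$ forces $k = 1$, in which the theorem returns $m = n$ and thus the bound $d(X,\D_{n,k}) \leq d(\tfrac{1}{n}I_n,\D_{n,k}) = 1 - 1/n = 1/k - 1/n$; note that $k(k+1) = 2 \leq n$ automatically since $k = 1 < n$, so this agrees with the first branch of the corollary. For $k \geq 2$ we have $k > \tfrac{1+\sqrt{5}}{2} = \bigl(\tfrac{1+\sqrt{5}}{2}\bigr)r$, so the intermediate regime $r < k \leq \bigl(\tfrac{1+\sqrt{5}}{2}\bigr)r$ (where the floor/ceiling candidates could be relevant) is empty, and only sub-case 3a applies.

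Next I would simplify the threshold $k \geq g(1,n) = \tfrac{1}{2}\bigl(\sqrt{4n+1}-1\bigr)$. Rearranging gives $2k+1 \geq \sqrt{4n+1}$, and squaring (both sides are positive) yields the equivalent condition $k(k+1) \geq n$. According to the theorem, in this regime the optimal $m$ is $\min\{r+k, n\} = \min\{k+1,n\} = k+1$ (using $k < n$), which produces the value
\[
    d(\tfrac{1}{k+1}I_{k+1} \oplus O_{n-(k+1)},\D_{n,k}) = \frac{r}{m} = \frac{1}{k+1}.
\]
In the complementary regime $k(k+1) < n$ the theorem gives $m = n$, yielding
\[
    d(\tfrac{1}{n}I_n,\D_{n,k}) = r\left(\frac{1}{k}-\frac{1}{n}\right) = \frac{1}{k}-\frac{1}{n}.
\]

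Finally, I would verify the boundary so that the $\leq$ in the case split is justified: when $k(k+1) = n$, one computes $\tfrac{1}{k}-\tfrac{1}{n} = \tfrac{1}{k}-\tfrac{1}{k(k+1)} = \tfrac{1}{k+1}$, so the two branches agree and either convention yields the same bound. There is no real obstacle here; the only substantive step is the algebraic reduction of $g(1,n)$ to $k(k+1) \geq n$, and then plugging the appropriate values of $m$ into the formula from Theorem~\ref{thm:low_rank} that gives $d(\tfrac{1}{m}I_m \oplus O_{n-m},\D_{n,k})$ for the operator norm.
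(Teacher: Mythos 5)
Your proposal is correct and follows essentially the same route as the paper's own proof: specialize Theorem~\ref{thm:ky_fan_extreme} to $r=1$, dispatch $k=1$ via the ``$r/2 < k \leq r$'' branch, note that $k \geq 2$ forces $k > \bigl(\tfrac{1+\sqrt{5}}{2}\bigr)r$ so only Case~3a is relevant, and translate $k \geq g(1,n)$ into $k(k+1) \geq n$. Your explicit algebraic reduction of $g(1,n)$ and your check that the two branches agree at the boundary $k(k+1)=n$ are small additions of care that the paper leaves implicit, but the argument is the same.
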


\begin{proof}
    If $k = 1$ then the result follows from the ``$r/2 < k \leq r$'' branch of Theorem~\ref{thm:ky_fan_extreme}. If $k \geq 2$ then (since $(1+\sqrt{5})/2 < 2$) we have $((1+\sqrt{5})/2)r < k$. Similarly, if $k(k+1) < n$ then (in the notation of Theorem~\ref{thm:ky_fan_extreme}) we have $k < g(n,r)$. The $k \geq 2$, $k(k+1) \leq n$ portion of this corollary thus follows from the ``$((1+\sqrt{5})/2)r < k$ and $k < g(n,r)$ branch of Theorem~\ref{thm:ky_fan_extreme}.

    The ``otherwise'' branch of this corollary follows from the ``$((1+\sqrt{5})/2)r < k$ and $g(n,r) \leq k$ branch of Theorem~\ref{thm:ky_fan_extreme}.
\end{proof}

\section*{Acknowledgements}

The authors thank Matthew Lin for helpful conversations related to this work. N.J.\ was supported by NSERC Discovery Grant number RGPIN-2022-04098. C.-K.~L. is an affiliate member of the Institute for Quantum Computing of the University of Waterloo, and also an affiliate member of the Quantum Science \& Engineering Center of the George Mason University; his research was partially supported by the Simons Foundation Grant 851334.

\bibliographystyle{alpha}
\bibliography{ref}

\end{document}